\theoremstyle{plain}
\newtheorem{thm}{\protect\theoremname}
\newenvironment{proof}[1][\protect\proofname]{\par
\normalfont\topsep6\p@\@plus6\p@\relax
\trivlist
\itemindent\parindent
\item[\hskip\labelsep\scshape #1]\ignorespaces
}{%
\endtrivlist\@endpefalse
}
\providecommand{\proofname}{Proof}
\theoremstyle{plain}
\newtheorem{cor}[thm]{\protect\corollaryname}
\providecommand{\corollaryname}{Corollary}
\providecommand{\theoremname}{Theorem}
\begin{document}

\title{What breaks parity-time-symmetry? \textemdash{} pseudo-Hermiticity
and resonance between positive- and negative-action modes }

\author{Ruili Zhang}

\affiliation{School of Science, Beijing Jiaotong University, Beijing 100044, China}

\author{Hong Qin }

\thanks{Corresponding author, hongqin@ustc.edu.cn}

\affiliation{School of Physics, University of Science and Technology of China, Hefei, Anhui 230026,
China}

\affiliation{Plasma Physics Laboratory, Princeton University, Princeton, NJ 08543,
USA}

\author{Jianyuan Xiao}

\affiliation{School of Physics, University of Science and Technology of China, Hefei, Anhui 230026,
China}

\author{Jian Liu}

\affiliation{School of Physics, University of Science and Technology of China, Hefei, Anhui 230026,
China}
\begin{abstract}
It is generally believed that Parity-Time (PT)-symmetry breaking occurs
when eigenvalues or both eigenvalues and eigenvectors coincide. However,
we show that this well-accepted picture of PT-symmetry breaking is
incorrect. Instead, we demonstrate that the physical mechanism of
PT-symmetry breaking is the resonance between positive- and negative-action
modes. It is proved that PT-symmetry breaking occurs when and only
when this resonance condition is satisfied, and this mechanism applies
to all known PT-symmetry breakings observed in different branches
of physics. The result is achieved by proving a remarkable fact that
in finite dimensions, a PT-symmetric Hamiltonian is necessarily pseudo-Hermitian,
regardless whether it is diagonalizable or not.
\end{abstract}

\maketitle

It is a fundamental assumption of quantum physics that observables
are Hermitian operators. Bender and collaborators \cite{bender1998real,bender2002complex,bender2007making}
proposed to relax this assumption by considering Parity-Time (PT)-symmetric
operators. Since its conception, PT-symmetry has been found important
applications in many branches of physics \cite{jones1999energy,mostafazadeh2002pseudo,heiss2004exceptional,bender2007making,el2007theory,makris2008beam,makris2010pt,schomerus2010quantum,chong2011p,ge2012conservation,ramezani2012exceptional},
including classical physics \cite{klaiman2008visualization,schindler2011experimental,peng2014parity,hodaei2017enhanced}
and quantum physics \cite{dorey2007ode,musslimani2008optical,longhi2009bloch,lin2011unidirectional,szameit2011p,regensburger2012parity,sarma2014continuous,ablowitz2016inverse,zhang2016observation,jahromi2017statistical}.
PT-symmetry is also observed in many laboratory experiments \cite{guo2009observation,ruter2010observation,feng2011nonreciprocal,bittner2012p,liertzer2012pump}.

Among all the current research topics actively pursued in the field,
the breaking of PT-symmetry is arguably the most prominent one. The
question needs to be answered is when and how PT-symmetry breaking
occurs as system parameters vary. In this aspect, the role of exceptional
points (EPs), which are defined by Kato \cite{kato1966perturbation}
to be points in the parameter space where two or more eigenvalues
of the system coincide, has been noticed. In most of the existing
literature, PT-symmetry
breaking is described as the process of two real eigenvalues of the
Hamiltonian coinciding on the real axis at EPs and then moving off
the real axis in opposite direction to generate a complex conjugate
pair of eigenvalues. It is often stated or implied
that EPs are the locations for PT-symmetry breaking \cite{brandstetter2014reversing,heiss2004exceptional,heiss2012physics}.
Recently, some researchers \cite{klaiman2008visualization,ge2016anomalous,feng2017non,Ashida2017PT,el2018non}
emphasized that in order to break the PT-symmetry at EPs, the eigenvectors
need to coalesce as well. This condition of coalescence of both eigenvalues
and eigenvectors is exactly equivalent to the condition that the Hamiltonian
$H$ is not diagonalizable at the EPs.

In this paper, we show that these well-known pictures of PT-symmetry
breaking are incorrect. To motivate the discussion, we use the following
example to show that (i) PT-symmetry breaking occurs at some EPs,
but does not at all EPs, and (ii) when PT-symmetry breaking occurs
at EPs, the Hamiltonian can be either non-diagonalizable (coalescence
of both eigenvalues and eigenvectors) or diagonalizable (coalescence
of eigenvalues only). Let's consider the Hamiltonian
\begin{equation}
H=\left(\begin{array}{cccc}
-3 & c & 0 & 0\\
c & -3 & 0 & 0\\
b-ci & 7i+a & 4-ia & ib\\
-7i+a & b+ic & -ib & 4+ia
\end{array}\right)\thinspace,\label{eq:Ex1}
\end{equation}
where $a,b$ and $c$ are real numbers. For the parity transformation
that switches the first row with the second row, and the third row
with the fourth row, $H$ is PT-symmetric. Its eigenvalues are $(-3\pm c,4\pm\sqrt{-a^{2}+b^{2}})$.
In the parameter space, EPs locate at $c=0$ and $a=\pm b$, and the
corresponding coincident eigenvalues are $-3$ and $4$, respectively.
Obviously, PT-symmetry breaking occurs at the EPs with $a=\pm b$.
Two of the eigenvalues $(4\pm\sqrt{-a^{2}+b^{2}})$ move towards each
other when $b^{2}$ approaches $a^{2}$ from the right on the real
axis, and become a complex conjugate pair when $b^{2}$ is less than
$a^{2}.$ On the other hand, there is no PT-symmetry breaking at the
EPs with $c=0,$ i.e., the eigenvalues $(-3\pm c)$ are always real
in the neighborhood of $c=0$ on the real axis.

This example also shows that non-diagonalizable Hamiltonian at EPs
(coalescence of both eigenvalues and eigenvectors) is a sufficient
but not necessary condition for PT-symmetry breaking. In Fig.\,\ref{f0-1},
the parameter space for $a$ and\textbf{ $b$} is plotted for a fixed
$c.$ When $(a,b)$ are in the shaded regions, the eigenvalues $(4\pm\sqrt{-a^{2}+b^{2}})$
are a complex conjugate pair, and when $(a,b)$ are in the un-shaded
regions, the eigenvalues $(4\pm\sqrt{-a^{2}+b^{2}})$ are real. The
points on the two lines $a=\pm b$ are EPs. It is easy to verify that
at the EPs defined by $a=\pm b\neq0$, $H$ is not diagonalizable.
But at the EP $(a=0,b=0)$, $H$ is diagonalizable. When the parameters
vary along Path 1 (blue line), PT-symmetry breaking occurs at $EP_{1}$,
where $H$ is not diagonalizable. But when the parameters vary along
Path 2 (red curve), PT-symmetry breaking occurs at $EP_{2}$, where
$H$ is diagonalizable. This shows that PT-symmetry breaking can happen
at EPs where the Hamiltonian can be either diagonalizable or non-diagonalizable.

This example will be analyzed in details as Example 1 after the development
of our general theory.

\begin{figure}
\includegraphics[scale=0.55]{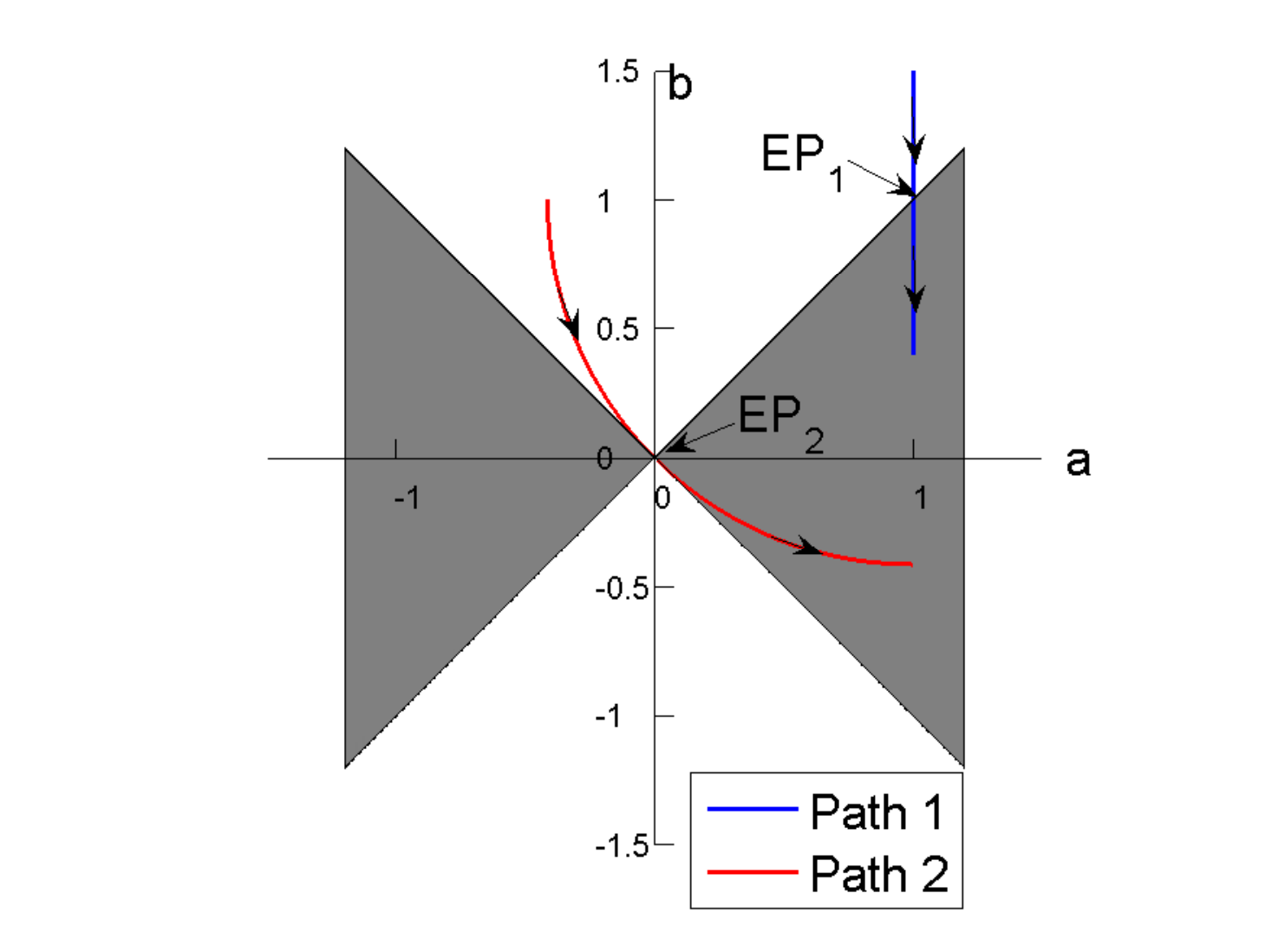}

\caption{Parameter space for eigenvalues $(4\pm\sqrt{-a^{2}+b^{2}})$ of $H$.
In the shaded regions, the eigenvalues are a complex conjugate pair.
When the parameters vary along Path 1 (blue line), PT-symmetry breaking
occurs at $EP_{1}$, where $H$ is not diagonalizable. But when the
parameters vary along Path 2 (red line), there is no PT-symmetry breaking
at $EP_{2}$, where $H$ is diagonalizable.}
\label{f0-1}
\end{figure}

In general, for a finite dimensional system, EP is a necessary but
not sufficient condition for PT-symmetry breaking, and EP with coalescence
of both eigenvalues and eigenvectors is a sufficient but not a necessary
condition. What is a necessary and sufficient condition for PT-symmetry
breaking?

In this paper, we will give such a condition. We will first prove
a remarkable fact that all PT-symmetric Hamiltonians in the finite
dimensions are pseudo-Hermitian \cite{Lee69,mostafazadeh2002pseudo,mostafazadeh2002pseudoII,mostafazadeh2002pseudoIII}.
(See Eq.\,\eqref{eq:PH} for definition). As a result, an action
can be assigned to each eigenmode of a PT-symmetric system. It is
shown that if an EP is caused by the resonance between two eigenmodes
with the same sign of action, then there is no PT-symmetry breaking
at this EP. Otherwise, i.e., if a positive-action mode resonates with
a negative-action mode at an EP, then PT-symmetry breaking will occur
along a curve passing through this EP in the parameter space. Therefore,
resonance between positive- and negative-action modes is a necessary
and sufficient condition and thus the physical mechanism for PT-symmetry
breaking.

Our result is built upon the mathematical work on the stability of
G-Hamiltonian system by Krein, Gel'fand and Lidskii \cite{Krein1950,Gel1955,KGML1958}
in 1950s. It turns out that the definitions of G-Hamiltonian \cite{Krein1950,Gel1955,KGML1958}
and pseudo-Hermitian \cite{Lee69,mostafazadeh2002pseudo,mostafazadeh2002pseudoII,mostafazadeh2002pseudoIII}
are identical for finite dimensions. It is probably more appropriate
to adopt the jargon of ``G-Hamiltonian'', since it appeared earlier
than ``pseudo-Hermitian''. However, to be more accessible to the
physics community, we will use both. By proving the fact that all
finite dimensional PT-symmetric systems are pseudo-Hermitian (or G-Hamiltonian),
we are able to borrow the results established by Krein, Gel'fand
and Lidskii. Especially, the action we defined for the eigenmodes
is the same as the Krein signature for the eigenmodes of G-Hamiltonian
systems; and the resonance between positive- and negative-action modes
is the same as the well-known Krein collision.

There is another fact that needs to be pointed out. Shortly after
the concept of PT-symmetry introduced by Bender et al. \cite{bender1998real},
Mostafazadeh proved that a diagonalizable PT-symmetric Hamiltonian
is pseudo-Hermitian \cite{mostafazadeh2002pseudo,mostafazadeh2002pseudoII,mostafazadeh2002pseudoIII}.
We note that Mostafazadeh's result is different from ours, which states
that a finite dimensional PT-symmetric Hamiltonian is always pseudo-Hermitian,
whether it is diagonalizable or not. The difference is significant,
because, as we pointed out previously, PT-symmetry breaking occurs
at EPs where the Hamiltonian can be either diagonalizable or non-diagonalizable.
Apparently, our result is more suitable for the investigation of the
the mechanism of PT-symmetry breaking.

We start our investigation from the definitions of PT-symmetry and
pseudo-Hermiticity (or G-Hamiltonian property) for the linear system
specified by a Hamiltonian $H$,
\begin{equation}
\dot{\boldsymbol{x}}=-iH\boldsymbol{x}=A\boldsymbol{x}\thinspace,\label{eq:ls}
\end{equation}
where $A$ is defined to be a shorthand notation of $-iH.$

The Hamiltonian $H$ is called PT-symmetric if
\begin{equation}
PTH-HPT=0\thinspace,\label{eq:PT}
\end{equation}
where $P$ is a linear operator satisfying $P^{2}=I$ and $T$ is
the complex conjugate operator \cite{bender2007making}. In finite
dimensions, which will be the focus of the present study, $H$, $A$
and $P$ can be represented by matrices, and the PT-symmetric condition
Eq.\,\eqref{eq:PT} is equivalent to
\begin{equation}
P\bar{H}-HP=0\textrm{ \thinspace\thinspace or\thinspace\thinspace\ }PA+\bar{A}P=0\thinspace,\label{eq:PT1}
\end{equation}
Here, $\bar{H}$ and $\bar{A}$ denote the complex conjugates of $H$
and $A$, respectively. The PT-symmetry condition can be understood
as follows. In terms of a $P$-reflected variable $\boldsymbol{y}\equiv P\boldsymbol{x}$,
Eq.\,\eqref{eq:ls} is
\begin{equation}
\dot{\boldsymbol{y}}=PAP^{-1}\boldsymbol{y}\thinspace.\label{eq:y}
\end{equation}
In general Eq.\,\eqref{eq:ls} is not $P$-symmetric, i.e., $PAP^{-1}\neq A$.
However, we can check the effect of applying an additional $T$-reflection,
i.e. $t\rightarrow-t$ and $i\rightarrow-i$. Then Eq.\,\eqref{eq:y}
becomes
\begin{equation}
\dot{\bar{\boldsymbol{y}}}=-\overline{PAP^{-1}}\bar{\boldsymbol{y}}\thinspace.
\end{equation}
If $-\overline{PAP^{-1}}=A$, which is equivalent to PT-symmetric
condition Eq.\,\eqref{eq:PT} or Eq.\,\eqref{eq:PT1}, then Eq.\,\eqref{eq:y}
in terms of $\bar{\boldsymbol{y}}$ is identical to Eq.\,\eqref{eq:ls}
in terms of $\boldsymbol{x}.$ Thus, PT-symmetry is an invariant property
of the system under the reflections of both parity and time.

We next define pseudo-Hermiticity or G-Hamiltonian property for the
finite-dimensional linear system Eq.\,\eqref{eq:ls}. A Hamiltonian
$H$ is called pseudo-Hermitian if there exist a non-singular Hermitian
matrix $G$ and Hermitian matrix $S$ such that
\begin{equation}
H=-G^{-1}S\thinspace;\label{eq:PH}
\end{equation}
or equivalently if there exist a non-singular Hermitian matrix $G$
and Hermitian matrix $S$ such that such that $A$ satisfying
\begin{equation}
A=iG^{-1}S\thinspace.\label{eq:GH0}
\end{equation}
A matrix that can be decomposed as in Eq.\,\eqref{eq:GH0} is called
G-Hamiltonian \cite{KGML1958}. It is easy to verify that $A$ is
G-Hamiltonian or $H$ is pseudo-Hermitian if and only if there exists
a non-singular Hermitian matrix $G$ such that
\begin{equation}
A^{*}G+GA=0\thinspace,\label{eq:GH}
\end{equation}
where $A^{*}$ is the conjugate transpose of the matrix $A$. As
mentioned previously, the concept of G-Hamiltonian was introduced
by Krein, Gel'fand and Lidskii \cite{Krein1950,Gel1955,KGML1958}
in 1950s, and pseudo-Hermiticity was introduced by Lee and Wick independently
in 1969 \cite{Lee69}. For finite dimensional systems, the condition
for $H$ to be pseudo-Hermitian is identical to that for $A$ to be
G-Hamiltonian. We will use both terminologies. Note that the eigenvalues
of $A$ are related to those of $H$ by a simple factor of $-i.$

PT-symmetry and pseudo-Hermiticity (or G-Hamiltonian property) are
two important geometric or physical properties of the system under
investigation. We now establish a connection between PT-symmetry and
pseudo-Hermiticity in finite dimensions. First, we give the following
necessary and sufficient condition for a matrix to be G-Hamiltonian.

\begin{thm}
For a matrix $A\in C^{n\times n}$, it is G-Hamiltonian if and if
only it is similar to $-\bar{A}$, where $\bar{A}$ is its complex
conjugate .
\end{thm}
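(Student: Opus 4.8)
The plan is to reduce the defining relation $A^{*}G+GA=0$ of Eq.~\eqref{eq:GH} to a statement about similarity, and to exploit the fact that the set of all $X$ solving this relation is closed under conjugate transposition. The forward implication is essentially a one-line computation, while the converse carries the real content, because G-Hamiltonicity demands not just any invertible conjugating matrix but a \emph{Hermitian} one.

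First I would dispatch the forward direction. If $A$ is G-Hamiltonian, there is a non-singular Hermitian $G$ with $A^{*}G+GA=0$, hence $GA=-A^{*}G$ and $A=-G^{-1}A^{*}G$, so $A$ is similar to $-A^{*}$. Since $A^{*}=(\bar{A})^{T}$, we have $-A^{*}=(-\bar{A})^{T}$, and because every square matrix over $\mathbb{C}$ is similar to its transpose (they share the same elementary divisors, hence the same Jordan form), $-A^{*}$ is similar to $-\bar{A}$. Transitivity of similarity then gives that $A$ is similar to $-\bar{A}$.

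For the converse, suppose $A$ is similar to $-\bar{A}$. The same transpose-similarity fact shows $A$ is similar to $-A^{*}$, and taking the conjugate transpose of that similarity shows $-A$ is similar to $A^{*}$. Now I would read $A^{*}X+XA=0$ as the intertwining equation $A^{*}X=X(-A)$: an invertible solution exists precisely when $-A$ and $A^{*}$ are similar, which we have just established. Thus there is a non-singular solution $X$, but a priori it need not be Hermitian, and the hard part will be to upgrade it to a non-singular Hermitian solution, since only such a $G$ qualifies $A$ as G-Hamiltonian.

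To clear this obstacle I would observe that the solution set $\mathcal{S}=\{X:A^{*}X+XA=0\}$ is a complex vector space closed under conjugate transposition, because taking the conjugate transpose of $A^{*}X+XA=0$ yields $A^{*}X^{*}+X^{*}A=0$, so $X^{*}\in\mathcal{S}$ whenever $X\in\mathcal{S}$. Consequently, for every $z\in\mathbb{C}$ the pencil $zX+\bar{z}X^{*}$ lies in $\mathcal{S}$ and is Hermitian, since $(zX+\bar{z}X^{*})^{*}=\bar{z}X^{*}+zX$. It then remains only to choose $z$ making this matrix non-singular, which I would settle by a determinant argument: the polynomial $q(w)=\det(wX+X^{*})$ has leading coefficient $\det X\neq 0$, so it is not identically zero and has at most $n$ roots. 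As the unit circle is infinite, there is $w_{0}=e^{2i\theta_{0}}$ with $q(w_{0})\neq 0$, and for $z=e^{i\theta_{0}}$ one computes $\det(e^{i\theta_{0}}X+e^{-i\theta_{0}}X^{*})=e^{-in\theta_{0}}q(w_{0})\neq 0$. Setting $G=e^{i\theta_{0}}X+e^{-i\theta_{0}}X^{*}$ then yields a non-singular Hermitian matrix satisfying Eq.~\eqref{eq:GH}, so $A$ is G-Hamiltonian. The crux of the whole argument is this final upgrade from an arbitrary invertible intertwiner to a Hermitian one, made possible by the closure of $\mathcal{S}$ under conjugate transposition together with the non-vanishing of the leading coefficient $\det X$.
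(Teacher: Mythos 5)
Your proposal is correct, and its converse direction takes a genuinely different route from the paper's. The paper proves sufficiency by explicit construction on the Jordan canonical form: it classifies the Jordan blocks of a matrix similar to $-\bar{A}$ into two types (blocks $J(ia)$ for purely imaginary eigenvalues, and paired blocks $\bigl(J(a+bi),J(-a+bi)\bigr)$ for eigenvalues off the imaginary axis), writes down a concrete non-singular Hermitian $G_{j}$ for each block type, assembles $G'=\mathrm{Diag}(G_{1},\dots,G_{k})$, and transports it back to $A$ via $G=Q^{*}G'Q$. You instead observe that similarity of $A^{*}$ and $-A$ supplies \emph{some} invertible intertwiner $X$ with $A^{*}X+XA=0$, note that the solution space of this equation is closed under conjugate transposition, and Hermitize $X$ by passing to $e^{i\theta}X+e^{-i\theta}X^{*}$ for a $\theta$ chosen via the non-vanishing of the degree-$n$ polynomial $\det(wX+X^{*})$, whose leading coefficient is $\det X\neq 0$. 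Both arguments are sound (the forward directions coincide). Yours is shorter, avoids the Jordan-block case analysis entirely, and isolates the one genuinely nontrivial point \textemdash{} upgrading an invertible intertwiner to a Hermitian one. The paper's construction is heavier but pays for itself elsewhere: it produces canonical block representatives of $G$ from which the signature of $G$, and hence the count of positive- and negative-action (Krein-signature) modes that the rest of the paper's analysis of PT-symmetry breaking depends on, can be read off directly.
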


\begin{proof}
Necessity is easy to prove. If a matrix is G-Hamiltonian, i.e. satisfying
Eq.\,\eqref{eq:GH}, then $A=-G^{-1}A^{*}G$. Thus matrix $A$ is
similar to $-A^{*}$, and also to $-\bar{A}$. We use construction
to prove the sufficiency. For a matrix $A\in C^{n\times n}$, it can
be written as
\begin{equation}
A=Q^{-1}JQ\thinspace,
\end{equation}
where $J$ is its Jordan canonical form and $Q$ is a reversible matrix.
It's well known that its Jordan canonical form consists of several
Jordan blocks $J(\lambda)$, where
\begin{equation}
J(\lambda)==\left(\begin{array}{cccc}
\lambda & 1\\
 & \ddots & \ddots\\
 &  & \lambda & 1\\
 &  &  & \lambda
\end{array}\right)_{m\times m}\thinspace.
\end{equation}
When $m=1$, the Jordan block $J(\lambda)$ is reduced to $\lambda$.
If $A$ is similar to $-\bar{A}$, then its eigenvalues are symmetric
with imaginary axis, and they are either pure imaginary numbers or
complex number pairs of the form $\lambda=a+bi$ and $-\bar{\lambda}=-a+bi$,
where $a$ and $b$ are real numbers. Accordingly, there are two kinds
of matrix blocks
\begin{equation}
F_{1}=J(ia)=\left(\begin{array}{cccc}
ia & 1\\
 & \ddots & \ddots\\
 &  & ia & 1\\
 &  &  & ia
\end{array}\right)_{m\times m}
\end{equation}
and
\begin{equation}
 F_{2}=\left(\begin{array}{cc}
J(a+bi) & 0\\
0 & J(-a+bi)
\end{array}\right)_{2l\times2l}.
\label{eq:J123}
\end{equation}
The Jordan matrix can now be expressed as $J=Diag(M_{1},M_{2},\cdots,M_{k}),$
where $M_{j}$ is in the form of $F_{1}$ or $F_{2}$. In the following,
we prove that both types of matrix blocks are G-Hamiltonian. For the
first type of matrix block $F_{1}$, if $m$ is odd, the corresponding
Hermitian matrix is
\begin{equation}
G=\left(\begin{array}{ccccc}
 &  &  &  & K_{1}\\
 &  &  & K_{1}\\
 &  & \iddots\\
 & K_{1}\\
1
\end{array}\right)\text{,}\thinspace\text{where}\thinspace K_{1}=\left(\begin{array}{cc}
0 & 1\\
-1 & 0
\end{array}\right)\thinspace.
\end{equation}
For $F_{1}$ with even order and $F_{2}$, the corresponding Hermitian
matrix is
\begin{equation}
G=\left(\begin{array}{cccc}
 &  &  & K_{2}\\
 &  & K_{2}\\
 & \iddots\\
K_{2}
\end{array}\right)\thinspace,\text{where}\thinspace K_{2}=\left(\begin{array}{cc}
0 & i\\
-i & 0
\end{array}\right).\label{eq:2l-1}
\end{equation}
Then for the matrix blocks $M_{j}$, they can be written as $M_{j}=iG_{j}^{-1}S_{j}$.
We construct $G^{'}$ and $S^{'}$ using $G_{j}$ and $S_{j}$ respectively
as follows
\begin{equation}
\begin{aligned}G^{'} & =Diag(G_{1},G_{2},\cdots,G_{k}),\\
S^{'} & =Diag(S_{1},S_{2},\cdots,S_{k}),
\end{aligned}
\end{equation}
and the Jordan canonical form is $J=iG^{'-1}S^{'}$. Then we have
\begin{equation}
A=Q^{-1}JQ=Q^{-1}iG^{'-1}S^{'}Q=iQ^{-1}G^{'-1}Q^{-*}Q^{*}S^{'}Q\thinspace.
\end{equation}
Let
\begin{align}
G & =Q^{*}G^{'}Q\label{eq:Q-1}\\
S & =Q^{*}S^{'}Q\thinspace,\label{eq:S-1}
\end{align}
we obtain
\begin{equation}
A=iG^{-1}S\thinspace,
\end{equation}
where $G$ and $S$ are Hermitian matrix, and $G$ is non-singular.
\end{proof}
The theorem is proved by constructing a non-singular Hermitian matrix
$G$ for a matrix $A$ similar to $-\bar{A}$, but $G$ is not unique.
Usually, we can find more than one non-singular Hermitian matrices
$G$ for a given matrix $A$.

Because for a PT-symmetric Hamiltonian $H$ satisfying Eq.\,\eqref{eq:PT1},
$A$ is indeed similar to $-\bar{A}$, we have the following important
conclusion.
\begin{cor}
For finite dimensional systems, a PT-symmetric Hamiltonian $H$ is
also pseudo-Hermitian. \label{corollary}
\end{cor}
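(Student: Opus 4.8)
The plan is to reduce the corollary directly to the Theorem just established, which characterizes G-Hamiltonian matrices as precisely those similar to $-\bar{A}$. Since $H$ is pseudo-Hermitian exactly when $A=-iH$ is G-Hamiltonian, it suffices to exhibit an invertible matrix conjugating $A$ into $-\bar{A}$. The PT-symmetry hypothesis hands us such a matrix for free, namely the parity operator $P$.

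First I would note that $P$ is invertible: the defining relation $P^{2}=I$ gives $P^{-1}=P$. Then I would take the finite-dimensional PT-symmetry condition $PA+\bar{A}P=0$ from Eq.\,\eqref{eq:PT1} and rearrange it as $PA=-\bar{A}P$. Left-multiplying by $P^{-1}$ yields
\begin{equation}
A=-P^{-1}\bar{A}P=P^{-1}(-\bar{A})P\thinspace,
\end{equation}
which is exactly the assertion that $A$ is similar to $-\bar{A}$, with $P$ as the conjugating matrix.

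Having produced the similarity, I would invoke the Theorem: any $A$ similar to $-\bar{A}$ is G-Hamiltonian, so there exist a non-singular Hermitian $G$ and a Hermitian $S$ with $A=iG^{-1}S$, equivalently $H=-G^{-1}S$ as in Eq.\,\eqref{eq:PH}. Therefore $H$ is pseudo-Hermitian, which is the assertion of the corollary.

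I do not expect a genuine obstacle here: all the analytic and constructive content — producing the Hermitian matrices $G$ and $S$ block by block from the Jordan form — is already carried by the Theorem, and the corollary is merely the one-line algebraic bridge from the PT-condition to the similarity hypothesis. The only point meriting a moment's care is the complex-conjugation bookkeeping: one must read the similarity off the condition in the precise form $PA+\bar{A}P=0$ and rely solely on $P^{2}=I$ for invertibility, so that the conclusion is similarity to $-\bar{A}$ and nothing else.
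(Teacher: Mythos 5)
Your proposal is correct and follows exactly the route the paper takes: it deduces from the PT-symmetry condition $PA+\bar{A}P=0$ and $P^{2}=I$ that $A=P^{-1}(-\bar{A})P$ is similar to $-\bar{A}$, and then invokes the preceding Theorem to conclude that $A$ is G-Hamiltonian, i.e., $H$ is pseudo-Hermitian. The paper states this similarity in one line without writing out the conjugation; you have simply made that step explicit.
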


We would like to emphasize again that this fact holds regardless whether
$H$ is diagonalizable or not, and some interesting PT-symmetry breaking
occurs at EPs where $H$ is not diagonalizable.

The fact that a PT-symmetric system is also pseudo-Hermitian or G-Hamiltonian
can be utilized to investigate the mechanism of PT-symmetry breaking.
The dynamics of G-Hamiltonian system has been thoroughly developed
by Krein, Gel'fand and Lidskii \cite{Krein1950,Gel1955,KGML1958}
in 1950s, and the results can be directly applied to PT-symmetric
systems. Specifically, G-Hamiltonian theory gives a comprehensive
description on how a stable system becomes unstable as the system
varies. In terms of the eigenvalues of the Hamiltonian $H$, this
description is about how real eigenvalues of $H$ evolve into conjugate
pairs of complex eigenvalues, in another word, how PT-symmetry breaking
happens. Let's briefly summarize the main results of G-Hamiltonian
theory. (i) The eigenvalues of a G-Hamiltonian matrix $A$ are symmetric
with respect to imaginary axis. They are either pure imaginary numbers
or complex pairs. (ii) Let $\psi$ be an eigenmode (or eigenvector)
of $A$, an action of $\psi$ can be defined as \cite{zhang2016structure}
\[
Ac[\psi]\equiv\left\langle \mathbf{\psi},\psi\right\rangle =\psi^{*}G\psi\thinspace.
\]
Mathematically, this is known as the Krein signature \cite{Krein1950,Gel1955,KGML1958}.
It is called action because it has the dimension of {[}energy{]}$\times${[}time{]}.
(ii) The eigenvalues of $A$ can be classified according to the actions
of the corresponding eigenvectors. An $r$-fold eigenvalue $\lambda$
($Re(\lambda)=0$) of $A$ with its eigen-subspace $V_{\lambda}$
is called the first kind if all eigenmodes of $\lambda$ have positive
actions, i.e., $Ac[\mathbf{y}]>0$ for any $\mathbf{y}\neq0$ in $V_{\lambda}$.
It is called the second kind if all eigenmodes of $\lambda$ have
negative actions. If there exists a zero-action eigenmode, then $\lambda$
is called an eigenvalue of mixed kind \cite{KGML1958}. If an eigenvalue
is the first kind or the second kind, it's called definite. (iii)
The number of each kind of eigenvalue is determined by the Hermitian
matrix $G$. Let p be the number of positive eigenvalues and q be
the number of negative eigenvalues of the matrix $G$, then any G-Hamiltonian
matrix has p eigenvalues of first kind and q eigenvalues of second
kind (counting multiplicity). (iv) The G-Hamiltonian matrix is strongly
stable if and only if all of its eigenvalues lie on the imaginary
axis and are definite. Here, strongly stable means that all eigenvalues
of G-Hamiltonian matrix in an open neighborhood of the parameter space
lie on the imaginary axis.  As a result, a G-Hamiltonian system becomes
unstable when and only when a positive-action mode resonates with
a negative-action mode. This is a process known as the Krein collision.

Applying these results to PT-symmetric systems, we see that PT-symmetry
breaking can happen only when a multiple eigenvalue appears as a result
of two eigenmodes resonate, which occurs at an EP. However, if two
eigenmodes with the same sign of action resonate at an EP, then there
is no PT-symmetry breaking. PT-symmetry breaking is triggered only
when a positive-action mode resonates with a negative-action mode.
In the following, we use some examples to illustrate these facts.

\textbf{Example 1:} Let's study in details the PT-symmetry breaking
for the PT-symmetric Hamiltonian given by Eq.\,\eqref{eq:Ex1}. The
associated coefficient matrix $A=-iH$ is G-Hamiltonian with
\begin{equation}
G=\left(\begin{array}{cccc}
1 & i & 1 & 0\\
-i & 1 & 0 & 1\\
1 & 0 & 0 & i\\
0 & 1 & -i & 0
\end{array}\right)\thinspace.\label{eq:G0}
\end{equation}
This can be directly verified by showing that Eq.\,\eqref{eq:GH}
is satisfied. The eigenvalues of $G$ are $(\left(3\pm\sqrt{5}\right)/2,\left(-1\pm\sqrt{5}\right)/2)$,
three of which are positive and one is negative. Thus, three eigenvalues
of $A$ have positive action (or Krein signatures) and the other one
has negative action. We calculate the eigenmodes of $A$ numerically
and plot the changing process in Fig.\,\ref{f0}, where three eigenmodes
with positive action are marked by $M_{+}$ and the other one with
negative action is marked by $M_{-}$. As shown in Fig.\,\ref{f0}(a),
we fix $a=1$ and $c=-1$, and vary the parameter $b$ from $2$ to
$1/2$. When $b=2$, the eigenvalues of $A$ are all on the imaginary
axis. When $b=a=1$, two eigenmodes with different signs of actions
resonant on the imaginary axis. As $b$ decreasing the two eigenvalues
move off from the imaginary axis and PT-breaking occurs. In Fig.\,\ref{f0}(b),
we plot the changing process of eigenmodes of $A$ by varying $c$
from $-1$ to $1$ and fixing the other parameters $a=1$ and $b=2$.
As $c$ varying, two eigenmodes with opposite signs of actions are
locked on the imaginary axis and the other two with positive action
move towards each other. When $c=0$, the traveling two eigenmodes
collide at the EP. But when increasing $c$ to $1$, they still stay
on the imaginary axis and there is no PT-symmetry breaking. These
figures demonstrate that at the EP $b=a=1$ where two eigenmodes with
different signs of action resonant, PT-symmetry breaking occurs, and
at the EP $c=0$ where the collided eigenmodes have the same sign
of actions, PT-symmetry breaking does not occur. Meanwhile, we find
that at the EPs $b=a=1$, $H$ is not diagonalizable.

On the other hand, we parameterize $a$ and $b$ as $a=1+\sqrt{2}\cos(t)$
and $b=1+\sqrt{2}\sin(t)$, the Hamiltonian becomes

\begin{widetext}
\begin{align}
H^{'}&=\left(\begin{array}{cccc}
-3 & c & 0 & 0\\
c & -3 & 0 & 0\\
1+\sqrt{2}\sin(t)-ci & 7i+1+\sqrt{2}\cos(t) & 4-i(1+\sqrt{2}\cos(t)) & i(1+\sqrt{2}\sin(t))\\
-7i+1+\sqrt{2}\cos(t) & 1+\sqrt{2}\sin(t)+ic & -i(1+\sqrt{2}\sin(t)) & 4+i(1+\sqrt{2}\cos(t))
\end{array}\right).
\end{align}
\end{widetext}
When varying $t$ form $\pi$ to $3\pi/2$, we obtain the changing
process expressed by the red curve in Fig.\,\ref{f0-1}. When $t=5\pi/4$,
two eigenmodes with different actions collide on the axis and PT-breaking
occurs. The Krein collision process is similar to the process given
in Fig.\,\ref{f0}(a). At this EP $b=a=0$, $H$ is diagonalizable,
i.e., the corresponding eigenvalues resonant, but the eigenvectors
don't.

\begin{figure}
\includegraphics[scale=0.55]{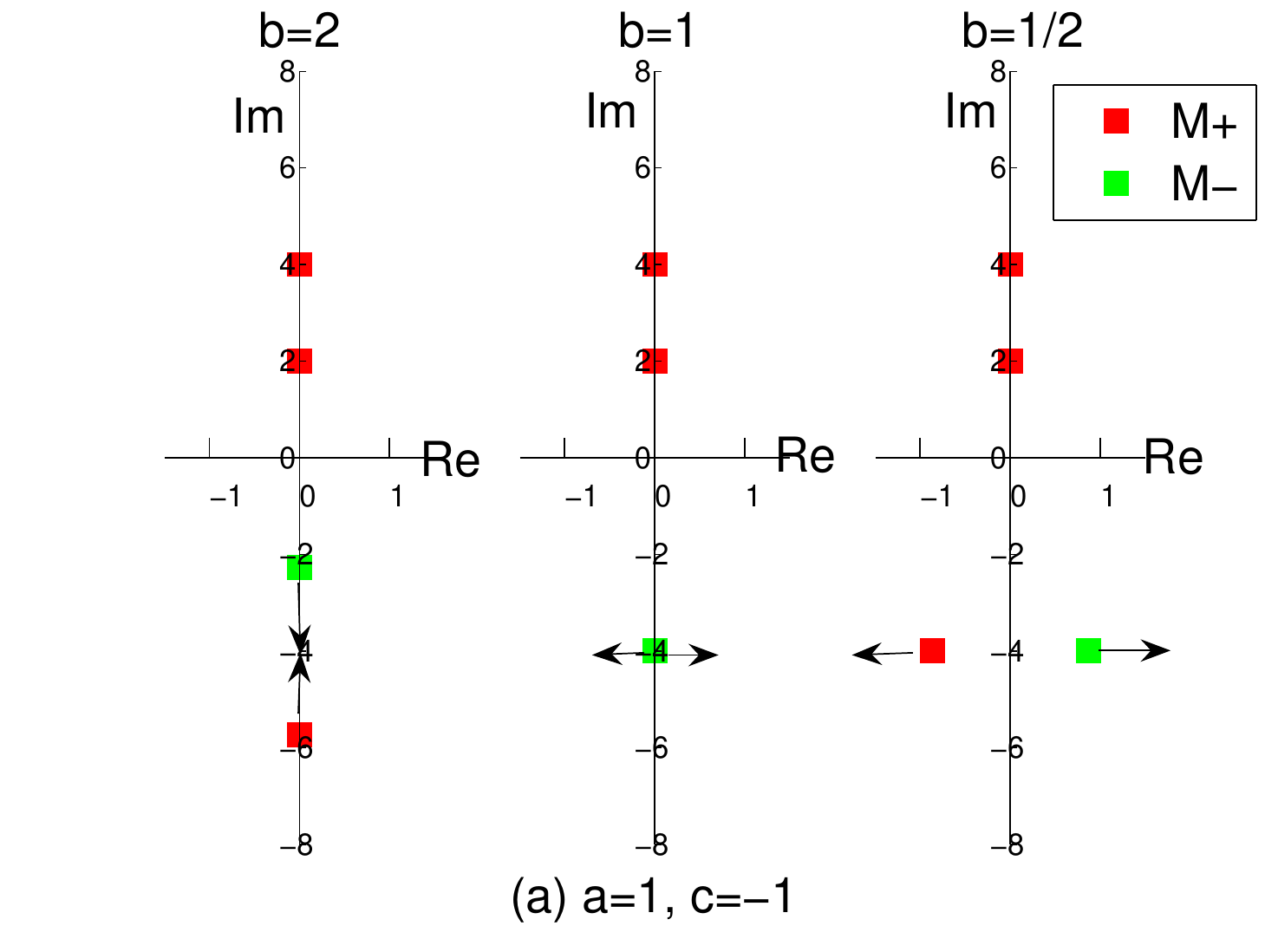}
\includegraphics[scale=0.55]{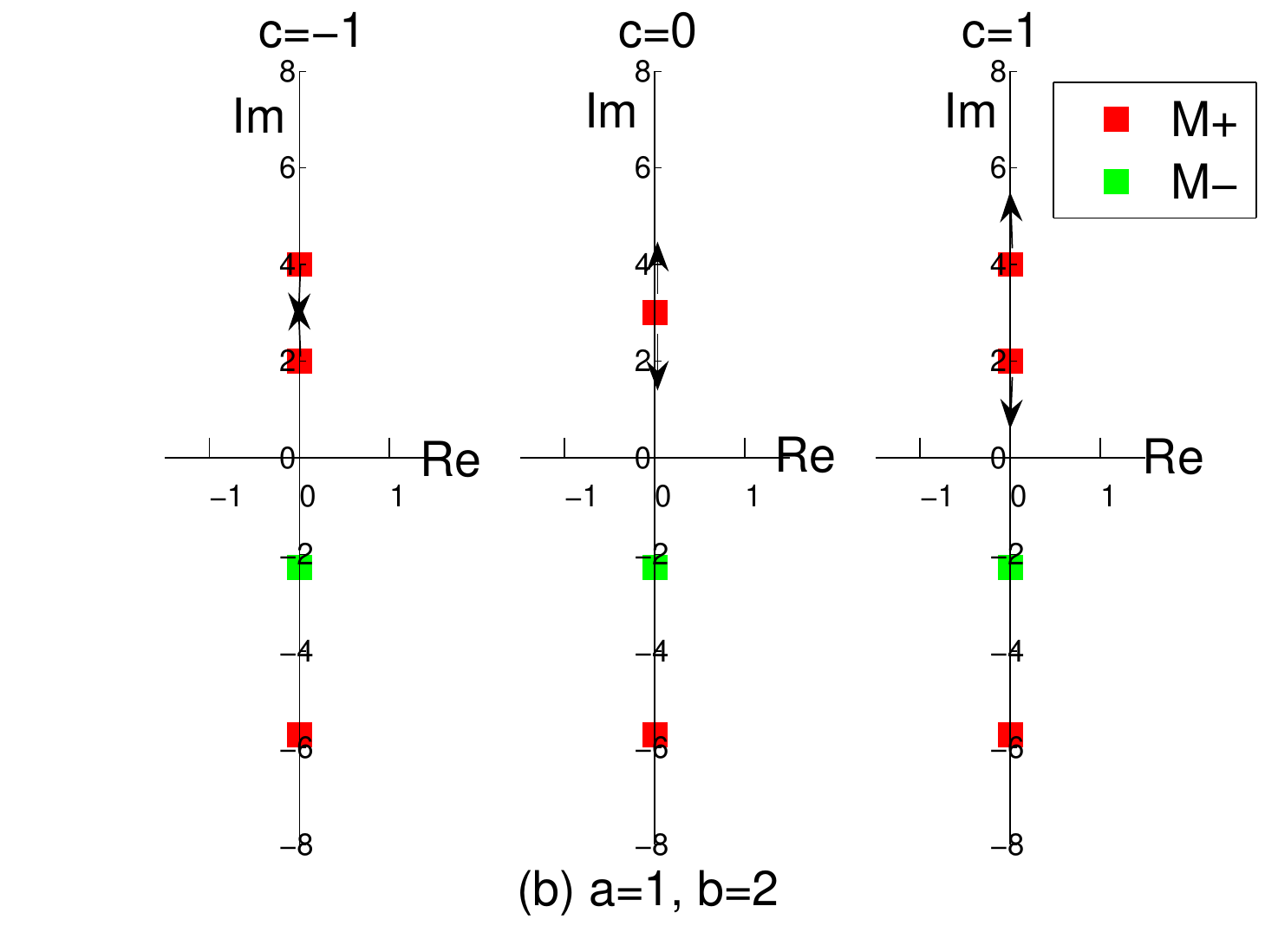}

\caption{(a) PT-symmetry breaking at the EP when a positive-action eigenmode
(red) resonates with a negative eigenmode (green) . (b) No PT-symmetry
breaking at the EP when two positive-action eigenmodes (red) resonate.}
\label{f0}
\end{figure}

\textbf{Example 2:} Consider the system of coupled oscillators
\begin{align}
\ddot{x}+\omega^{2}x+2\gamma\dot{x} & =-\varepsilon y\thinspace,\label{eq:ex21}\\
\ddot{y}+\omega^{2}y-2\gamma\dot{y} & =-\varepsilon x\thinspace,\label{eq:ex22}
\end{align}
where $\omega$, $\varepsilon$ and $\gamma$ are real. This is a
balanced loss-gain system studied in \cite{bender2013twofold}. Similar
and higher dimension examples can be found in refs.\,\cite{schindler2011experimental}
and \cite{bender2014systems}, respectively. In terms of canonical
coordinate $\boldsymbol{x}=(x,y,\dot{x},\dot{y}),$ the system is
\begin{align}
\dot{\boldsymbol{x}} & =A\boldsymbol{x}\label{eq:PT1-1}\\
A & =\left(\begin{array}{cccc}
0 & 0 & 1 & 0\\
0 & 0 & 0 & 1\\
-\omega^{2} & -\varepsilon & -2\gamma & 0\\
-\varepsilon & -\omega^{2} & 0 & 2\gamma
\end{array}\right)\thinspace,\label{eq:PT2-1}
\end{align}
Note that Eq.\,\eqref{eq:PT1-1} is a real non-canonical Hamiltonian
system. The coefficient matrix $A$ is PT-symmetric, i.e., it satisfies
Eq.\,\eqref{eq:PT1} with
\begin{equation}
P=\left(\begin{array}{cccc}
0 & -1 & 0 & 0\\
-1 & 0 & 0 & 0\\
0 & 0 & 0 & 1\\
0 & 0 & 1 & 0
\end{array}\right)\thinspace.
\end{equation}
According to the Corollary \ref{corollary}, $A$ is G-Hamiltonian.
We can verify that the following non-singular Hermitian matrix
\begin{equation}
G=\left(\begin{array}{cccc}
0 & -2i\gamma & 0 & i\\
2i\gamma & 0 & i & 0\\
0 & -i & 0 & 0\\
-i & 0 & 0 & 0
\end{array}\right)\thinspace\label{eq:G1}
\end{equation}
satisfies Eq.\,\eqref{eq:GH}. The eigenvalues of G in Eq.\,\eqref{eq:G1}
are $\lambda=\pm\gamma\pm\sqrt{1+\gamma^{2}}$, two of which are positive
and the other two are negative. Thus two eigenvalues of $A$ have
positive actions (or Krein signatures) and the other two have negative
actions. Let's use numerically calculated examples to observe the
breaking of PT-symmetry through the resonance between a positive-
and a negative-action mode. We plot the process in Fig.\,\ref{f1}
for the case of $\omega=2$ and $\gamma=1$. When $\varepsilon=3.7$,
the eigenvalues of $A$ are all imaginary numbers, two of which have
positive action (marked by $M_{1+}$ and $M_{2+}$) and the other
two have negative action (marked by $M_{1-}$ and $M_{2-}$) in Fig.\,\ref{f1}(a).
Fig.\,\ref{f1}(b) shows that when we decrease $\varepsilon$, $M_{1+}$
and $M_{2-}$ move towards each other, and so do $M_{1-}$ and $M_{2+}$
. Decreasing $\varepsilon$ to $2\sqrt{3}$, eigenmodes $M_{1+}$
and $M_{2-}$ collide on the imaginary axis, and eigenmodes $M_{1-}$
and $M_{2+}$ also collide, as shown in Fig.\,\ref{f1}(c). Because
the resonance are between modes with different sign of actions, the
eigenvalues of $A$ split into pairs symmetric with respect to the
imaginary axis and the PT-symmetry is broken. Fig.\,\ref{f1}(d)
shows that the four eigenvalues of $A$ move out of imaginary axis
when $\varepsilon=3.2$. These figures shows that PT-symmetry breaking
is triggered at EPs where a positive-actions mode resonates with a
negative-action mode. At this EP, $A$ is not diagonalizable.

\begin{figure}
\includegraphics[scale=0.55]{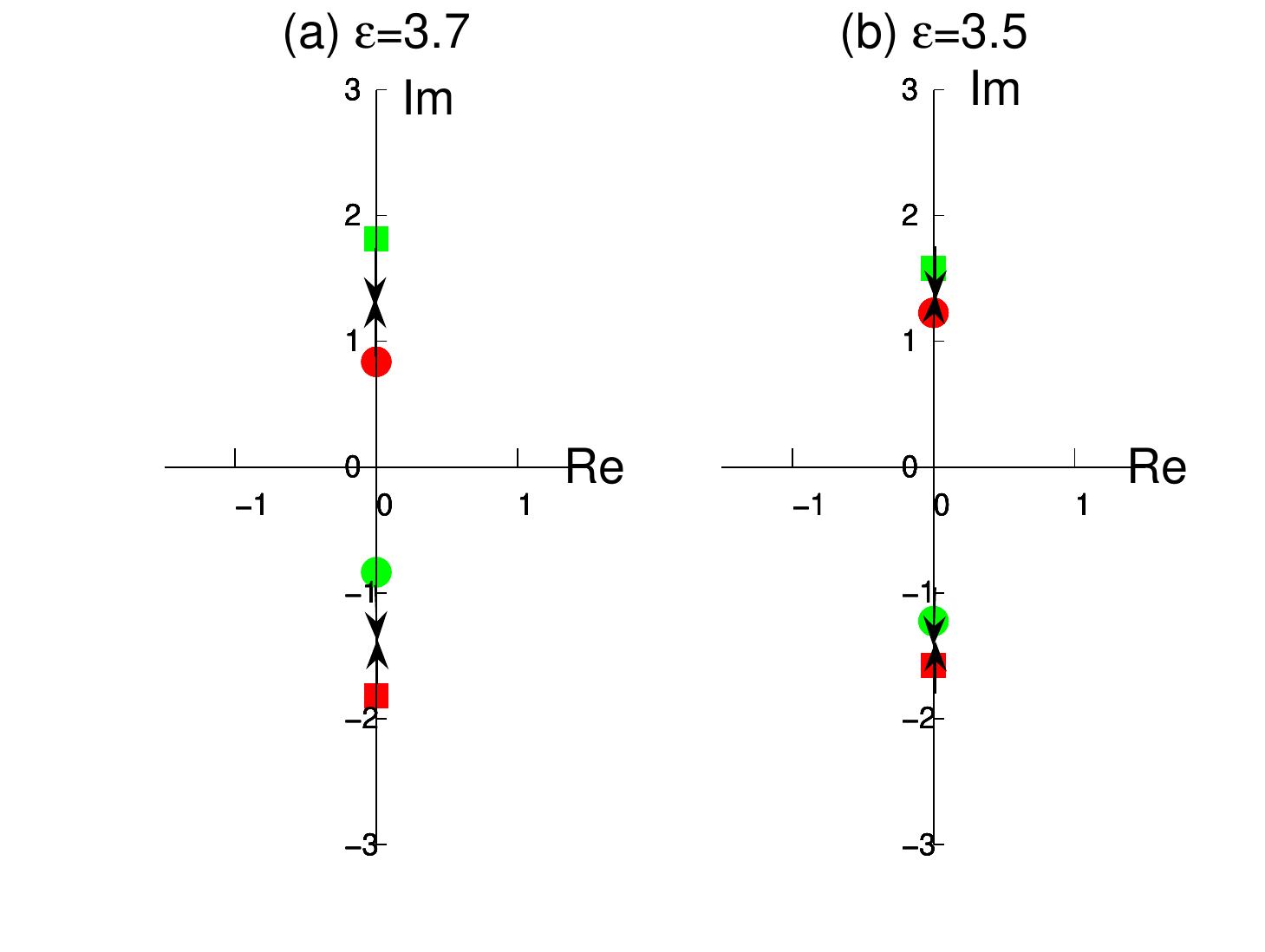}
\includegraphics[scale=0.55]{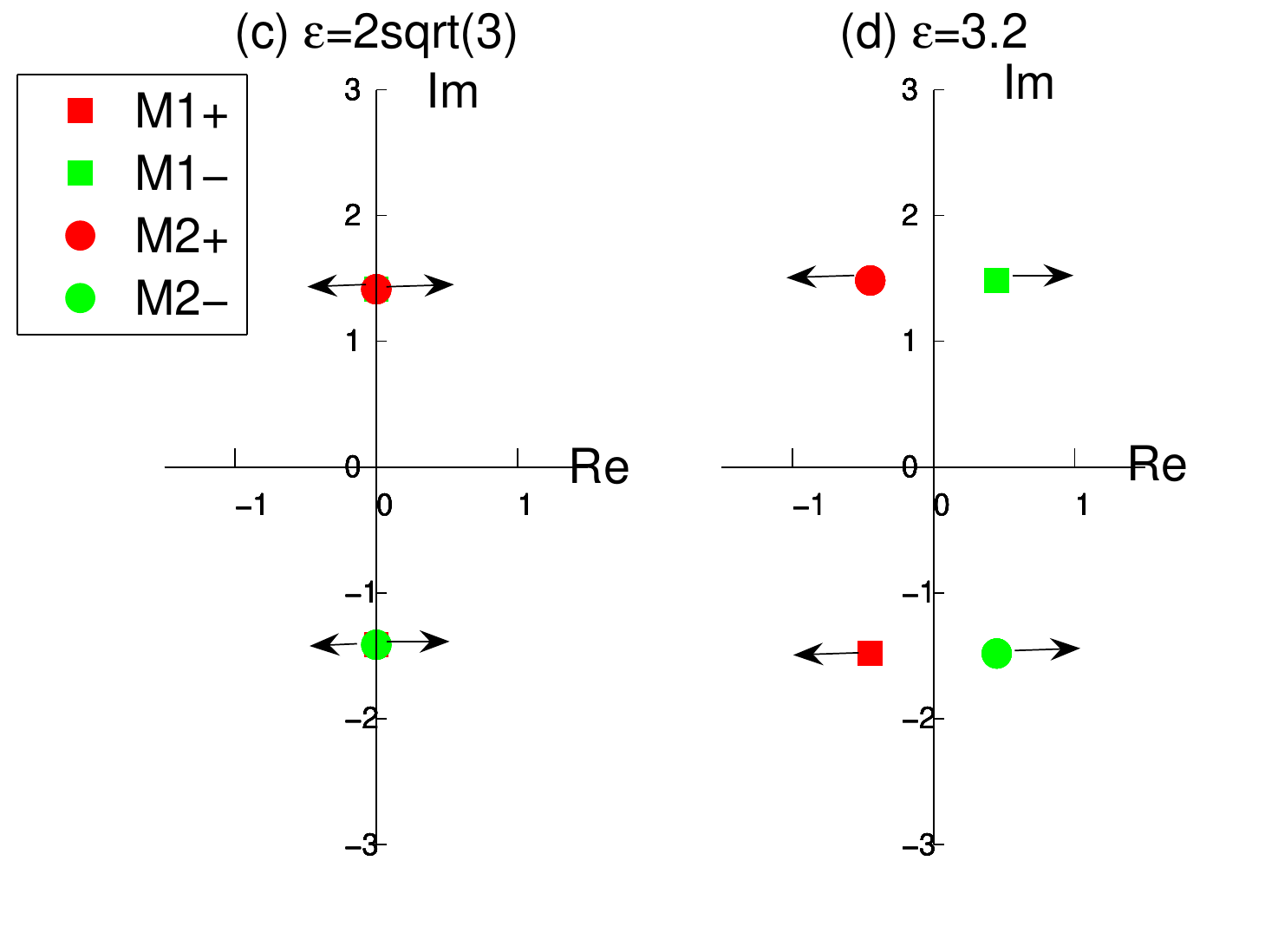}

\caption{PT-symmetry breaking is triggered at the EP where a positive-actions
mode resonates with a negative-action mode. The process is shown for
$\omega=2$, $\gamma=1$ and $\varepsilon$ varying from $3.7$ to
$3.2$. The EP is at $\varepsilon=2\sqrt{3}$, as shown in (c). }
\label{f1}
\end{figure}

In addition to these two examples, we have verified this mechanism
of PT-symmetry breaking for all other PT-symmetric systems that we are aware of.

This research is supported by the National Natural Science Foundation
of China (NSFC-11775219, 11775222, 11505186, 11575185 and 11575186),
the Fundamental Research Funds for the Central Universities (Grant
no. 2017RC033), National Key Research and Development Program (2016YFA0400600,
2016YFA0400601 and 2016YFA0400602), ITER-China Program (2015GB111003,
2014GB124005), the Geo-Algorithmic Plasma Simulator (GAPS) Project,
and the U.S. Department of Energy (DE-AC02-09CH11466).

\end{document}